\documentclass[journal]{IEEEtran}

\usepackage{amsmath,amssymb,amsfonts}
\usepackage{bm}
\usepackage{graphicx}
\usepackage{booktabs}
\usepackage{cite}
\usepackage{algorithm}
\usepackage{algpseudocode}
\usepackage{array}
\usepackage{url}
\usepackage{balance}

\AtBeginDocument{%
  \setlength{\abovedisplayskip}{4pt plus 2pt minus 2pt}%
  \setlength{\belowdisplayskip}{4pt plus 2pt minus 2pt}%
  \setlength{\abovedisplayshortskip}{2pt plus 1pt minus 1pt}%
  \setlength{\belowdisplayshortskip}{2pt plus 1pt minus 1pt}%
}

\newtheorem{proposition}{Proposition}

\newcommand{\ind}{\mathbf{1}}

\allowdisplaybreaks

\begin{document}
\raggedbottom

\title{Q-MAP: Multi-Platform Benchmarking of Distributed Quantum Computing for Coherent Controlled Islanding}

\author{Yuqi~Jiang,~\IEEEmembership{Graduate Student Member,~IEEE},
        Zhiding~Liang,~\IEEEmembership{Member,~IEEE},
        Qiang~Guan,~\IEEEmembership{Senior Member,~IEEE},
        Yan~Li,~\IEEEmembership{Senior Member,~IEEE},
        and~Ganesh~Kumar~Venayagamoorthy,~\IEEEmembership{Fellow,~IEEE}%
\thanks{This work is supported by the Office of Naval Research under award
N00014-22-1-2504 and the National Science Foundation under awards OAC-2417773
and ECCS-2413237/2413238.}%
\thanks{Y. Jiang and Y. Li are with the Department of Electrical Engineering,
The Pennsylvania State University, University Park, PA 16802, USA.}%
\thanks{Z. Liang is with the Department of Computer Science,
Rensselaer Polytechnic Institute, Troy, NY 12180, USA.}%
\thanks{Q. Guan is with the Department of Computer Science,
Kent State University, Kent, OH 44242, USA.}%
\thanks{G. K. Venayagamoorthy is with the Real-Time Power and Intelligent
Systems Laboratory, Holcombe Department of Electrical and Computer Engineering,
Clemson University, Clemson, SC 29634, USA, and with the University of Pretoria,
Pretoria, South Africa.}%
\thanks{Corresponding author: Yan Li (e-mail: yql5925@psu.edu).}}

\maketitle

\begin{abstract}
The integration of distributed energy resources into power networks is
accelerating. The resulting variability narrows operating margins, so a
disturbance can cascade into a wide-area blackout.
Controlled islanding arrests that propagation by splitting a compromised grid
into self-sustaining islands that keep coherent generators together. Exact
classical solutions become intractable as the bus count and island number
grow. Gate-based quantum optimization provides a different route through this
combinatorial space, although its reach is limited when one circuit carries
every bus assignment, since qubit count and depth then follow grid size. In
this study, a round-synchronous distributed quantum computing framework is
developed for coherent controlled islanding under a fixed per-circuit qubit
budget. Every round derives all regional subproblems from one frozen grid-wide
snapshot, dispatches them to independent quantum backends at the same time,
and merges the returned candidates classically into one globally evaluated
update. Circuits executed
in parallel therefore keep a constant size as the grid grows, and a round
costs the slowest region rather than the sum of all of them. Benchmarking
spans IEEE systems from 9 to 300 buses on simulation and on quantum processors
of different architectures. The framework attains optimal and operationally
feasible partitions on every platform under noise, even where compilation cost
differs by nearly an order of magnitude. Bounding width in this way places grids
beyond the reach of monolithic circuits within range of present devices and
establishes a multi-backend baseline for quantum computing in large-scale
power-system optimization.
\end{abstract}

\begin{IEEEkeywords}
Controlled islanding, distributed optimization, 
quantum approximate optimization algorithm (QAOA), quantum noise robustness, synchronous coordination.
\end{IEEEkeywords}

\section{Introduction}
\label{sec:introduction}

The growing integration of distributed energy resources introduces new
operational challenges, including the prevention of cascading
failures~\cite{yi2021aggregate,vaiman2012risk}. Their uncertainty, heterogeneity,
and dispersed coordination complicate secure operation under stressed
conditions~\cite{yi2021aggregate}. Following an initial outage, protection
actions and redistributed power flows can trigger successive disconnections and
propagate disturbances across the network~\cite{vaiman2012risk}. Rapid
containment is therefore essential to prevent widespread blackouts.

Controlled islanding contains severe disturbances by opening selected lines
before cascading failures spread~\cite{ahmed2003scheme,kyriacou2018controlled}.
The resulting islands must preserve coherent generators and sufficient
operating resources while minimizing disruption. Identifying such a partition
quickly is therefore a constrained combinatorial optimization problem.

Classical solutions use slow coherency, graph-based search, and mathematical
programming. Slow-coherency methods identify compatible generator groups but
require a separate cutset search to assign the remaining
buses~\cite{you2004slow,yang2006slowcoherency}. Ordered binary decision diagram
methods search feasible splits systematically but scale
combinatorially~\cite{sun2003splitting}, whereas
spectral clustering gains efficiency through relaxation while often enforcing
only limited feasibility conditions~\cite{ding2013twostep}. Mixed-integer models
represent operational constraints more explicitly~\cite{teymouri2019toward,patsakis2019strong},
but their computational burden grows with the binary model. Classical methods
therefore scale poorly, making global optimality increasingly difficult to
guarantee as system size grows.

Quantum optimization offers a potential approach to NP-hard combinatorial
problems by encoding an objective as a Hamiltonian whose low-energy states
represent favorable solutions. The quantum approximate optimization algorithm
(QAOA) prepares and variationally optimizes such
states through alternating problem and mixing evolutions~\cite{zhou2020qaoa}.
Implementations on superconducting processors have demonstrated variational
quantum optimization on nonplanar graph problems, supporting the potential of
QAOA for challenging combinatorial tasks~\cite{harrigan2021qaoa}.

Quantum computing has also reached diverse application domains. Quantum
algorithms have been developed for financial risk
analysis~\cite{woerner2019risk}, molecular ground-state estimation in
chemistry~\cite{peruzzo2014variational}, protein folding in computational
biology~\cite{robert2021protein}, and quantum-enhanced machine
learning~\cite{havlicek2019supervised}. This breadth provides a foundation for
extending quantum methods to complex infrastructure optimization.

Within critical infrastructure, quantum methods have recently been applied to
power-system monitoring, operation, and network partitioning. Representative
studies address optimal phasor measurement unit
placement~\cite{jiang2025pmu}, stochastic economic
dispatch~\cite{han2025dispatch}, and power-grid
partitioning~\cite{hartmann2025quantum}. The partitioning approach relies on
quantum annealing rather than gate-based QAOA. It therefore leaves the QAOA
circuit design and the operational constraints required for controlled
islanding unresolved. Existing QAOA frameworks incorporate islanding
constraints through specialized encoding and postprocessing
strategies~\cite{jiang2026regrid,jiang2026pace}. Their circuit width, however,
still grows with the number of optimized buses. Since usable qubits remain
scarce on current noisy intermediate-scale quantum devices, this scaling
limits their applicability to large networks and motivates a distributed
formulation with bounded circuit width.

In this study, a
round-synchronous distributed QAOA framework, is proposed to
solve coherent controlled islanding under a prescribed per-circuit qubit
budget. The global problem is decomposed into qubit-bounded quadratic
unconstrained binary optimization (QUBO) subproblems that are solved
concurrently from a common frozen assignment. A classical
coordinator combines the regional candidate pools through feasibility-aware
beam search, constrained postprocessing, and monotonic global updates. This
architecture bounds the width of each quantum circuit independently of network
size while retaining global coordination across the complete islanding
problem. In this study, the framework is tested on the IEEE 9- to 300-bus test
systems across six quantum-computing environments, ranging from ideal and
calibrated-noise simulation to the IBM, IQM, and Rigetti processors. The
numerical examples verify the effectiveness of the proposed method and its
noise resilience across these backends. The main contributions of this paper
can be summarized as follows.
\begin{itemize}
\item A round-synchronous distributed QAOA framework is proposed that keeps
the quantum effort local and width-bounded while the coordination remains
global and classical, so the circuit size no longer scales with the network.
\item The framework recovers the optimal islanding solution on the IEEE 9- to
300-bus systems with at most 20 qubits per circuit, and that quality is
retained under calibrated noise and on real quantum processors, which
demonstrates the noise resilience of the proposed method.
\item The evaluation across six execution environments establishes a
pioneering multi-backend baseline for gate-based quantum optimization in
power-system applications, quantifying the depth and gate cost that each
platform imposes on the same workload.
\end{itemize}

\section{Constrained Bus-Partitioning Formulation}
\label{sec:formulation}

Controlled islanding partitions a power network into electrically separated,
self-sustaining islands while minimizing disruption. This task is formulated
as a constrained bus-partitioning problem. The constraints ensure that each
resulting island forms a feasible subsystem.

\subsection{Network Model and Disruption Objective}
\label{subsec:network}

Let $\mathcal P=(\mathcal B,\mathcal L)$ denote the pre-islanding power
network, modeled as a connected undirected graph with bus set $\mathcal B$,
$n=|\mathcal B|$, and line set $\mathcal L$. The goal is to assign every bus
to one of $K\ge2$ islands indexed by $\Omega=\{1,\ldots,K\}$; none of the
formulation below is restricted to the binary case. Following the
disruption-weight convention in~\cite{jiang2026pace}, each line
$\{i,j\}\in\mathcal L$ is weighted by the average magnitude of its
pre-islanding directional active-power flows $P^{0}_{ij}$ and $P^{0}_{ji}$,
\begin{equation}
\rho_{ij}=\tfrac12\big(|P^{0}_{ij}|+|P^{0}_{ji}|\big),
\qquad \{i,j\}\in\mathcal L,
\label{eq:line-weight}
\end{equation}
so that opening a heavily loaded interface incurs a larger penalty than
opening a lightly loaded one. All active-power quantities in this paper are
expressed in MW, so the line weights $\rho_{ij}$ and every objective value
reported in Section~\ref{sec:results} carry that unit.

A candidate partition is a label $\ell_i\in\Omega$ for every bus, collected
into $\bm\ell=(\ell_1,\ldots,\ell_n)\in\Omega^{n}$, with membership indicator
$u_{i,k}=\ind\{\ell_i=k\}$, where $\ind\{\cdot\}$ equals one when its argument
holds and zero otherwise. The disruption objective totals the weight of
every line whose two endpoints disagree,
\begin{equation}
\Phi(\bm\ell)=\!\!\sum_{\{i,j\}\in\mathcal L}\!\!\rho_{ij}\ind\{\ell_i\neq\ell_j\}
=\!\!\sum_{\{i,j\}\in\mathcal L}\!\!\rho_{ij}
\Big(1-\sum_{k\in\Omega}u_{i,k}u_{j,k}\Big).
\label{eq:objective}
\end{equation}
The objective alone would favor partitions that avoid heavily loaded cuts,
but it does not ensure that the resulting subsystems are usable islands.
Generator and demand buses are denoted by $\mathcal B_{\mathrm g}$ and
$\mathcal B_{\mathrm d}$, with both sets contained in $\mathcal B$; the two
sets need not be disjoint. In addition, $K$ nonempty, pairwise-disjoint
coherent generator groups $\Gamma_1,\ldots,\Gamma_K\subseteq\mathcal
B_{\mathrm g}$, indexed by $j\in\{1,\ldots,K\}$, are given. A coherent group collects generator buses whose
swing dynamics require them to remain in a common island, and the following
feasibility requirements turn the cut-minimization objective into a
controlled-islanding model.

\subsection{Island Feasibility Conditions}
\label{subsec:feasibility}

An admissible label vector must satisfy four requirements. They are stated on
the complete assignment $\bm\ell$, because several of them cannot be certified
from a local bus decision or from an isolated regional subproblem.

\smallskip\noindent\textbf{Island connectivity.} For $k\in\Omega$, let
\begin{equation}
\mathcal B_k(\bm\ell)=\{i\in\mathcal B:\ell_i=k\},
\qquad
\mathcal P_k(\bm\ell)=\mathcal P[\mathcal B_k(\bm\ell)]
\label{eq:island-subgraph}
\end{equation}
be its bus set and induced subgraph, and let $c_k(\bm\ell)$ count the
connected components of $\mathcal P_k(\bm\ell)$, with $c_k(\bm\ell)=0$ when
$\mathcal B_k(\bm\ell)$ is empty.
Every resulting island must be a single contiguous piece of the network:
\begin{equation}
c_k(\bm\ell)=1,\qquad \forall k\in\Omega.
\label{eq:connectivity}
\end{equation}
Because this condition depends on the
full set of buses assigned to island $k$, it cannot be verified from any
single bus in isolation. It is checked, by depth-first search on
$\mathcal P_k(\bm\ell)$, only once a complete label vector has been formed.

\smallskip\noindent\textbf{Coherency and separation.} Connectivity alone does
not preserve generator dynamics. Every coherent group must stay together, and
distinct groups must end up in distinct islands,
\begin{subequations}
\label{eq:coherency-separation}
\begin{align}
\ell_i&=\ell_{i'}, &&\forall i,i'\in\Gamma_j,
\label{eq:coherency}\\
\ell_i&\neq\ell_{i'}, &&\forall i\in\Gamma_j,\ \forall i'\in\Gamma_{j'},\
j\neq j',
\label{eq:separation}
\end{align}
\end{subequations}
for all $j,j'\in\{1,\ldots,K\}$.
Since there are exactly $K$ groups and $K$ islands,
\eqref{eq:coherency-separation} forces the map from groups to islands to be a
bijection, without specifying which group is assigned to which island.

\smallskip\noindent\textbf{Minimum size and resource presence.} After coherent
groups are separated, each island must also contain enough buses and basic
resources to operate as a subsystem. With $\mu_{\mathrm b}$,
$\mu_{\mathrm g}$, and $\mu_{\mathrm d}$ denoting the required minimum bus,
generator, and demand counts per island,
\begin{equation}
|\mathcal B_k(\bm\ell)|\ge\mu_{\mathrm b},\quad
|\mathcal B_k(\bm\ell)\cap\mathcal B_{\mathrm g}|\ge\mu_{\mathrm g},\quad
|\mathcal B_k(\bm\ell)\cap\mathcal B_{\mathrm d}|\ge\mu_{\mathrm d},
\label{eq:presence}
\end{equation}
for every $k\in\Omega$. $\mu_{\mathrm b}$ is a configurable system parameter,
set to two buses for every benchmark system considered in this work. The
generator and demand thresholds are fixed at $\mu_{\mathrm g}=\mu_{\mathrm
d}=1$, so each island is required only to contain some generation and some
demand. The QUBO and dual formulation of Section~\ref{sec:method} extend to a
larger threshold without any structural change.

\smallskip\noindent\textbf{Full coverage.} Finally, the above requirements are
interpreted under a complete partition: every bus carries exactly one label,
\begin{equation}
\sum_{k\in\Omega}u_{i,k}=1,\qquad\forall i\in\mathcal B,
\label{eq:coverage}
\end{equation}
which holds identically once $u_{i,k}=\ind\{\ell_i=k\}$ is induced from a
label vector $\bm\ell\in\Omega^n$ as in \eqref{eq:objective}. Thus,
\eqref{eq:coverage} is not an additional constraint in the mathematical
program below. It is nevertheless recorded here because it reappears in
Section~\ref{sec:method} as a decoding condition on the quantum register that
represents $\ell_i$, where an all-zero or multiply-active register would
otherwise be ambiguous.

\subsection{Controlled-Islanding Optimization Problem}
\label{subsec:program}

The preceding objective and admissibility conditions define the exact target
problem addressed in this paper. Their combination gives
\begin{equation}
\Phi^{\star}=\min_{\bm\ell\in\Omega^{n}}\ \Phi(\bm\ell)
\quad\text{s.t.}\quad
\bm\ell\ \text{satisfies}\ \eqref{eq:connectivity},\quad
\eqref{eq:coherency-separation},\quad \text{and}\quad \eqref{eq:presence}.
\label{eq:program}
\end{equation}
Its feasible set is
\begin{equation}
\mathcal X=\{\bm\ell\in\Omega^{n}:\bm\ell\ \text{is feasible for \eqref{eq:program}}\},
\label{eq:feasible-set}
\end{equation}
assumed nonempty. Feasibility and solution quality throughout this paper are
defined with respect to $\mathcal X$ and the objective \eqref{eq:objective}
evaluated on the complete network. This convention is important for the
distributed method: the bounded subproblems introduced in
Section~\ref{sec:method} provide a tractable route to \eqref{eq:program}, but
they do not redefine $\mathcal X$ or replace the global objective $\Phi$.

\section{A Round-Synchronous Distributed QAOA Architecture}
\label{sec:method}

The proposed architecture decomposes the global islanding problem into
bounded regional QUBOs, each mapped to a compact QAOA circuit and executed
concurrently on an independent backend. A classical coordinator maintains the
global assignment, while stateless workers solve the regional subproblems in
a star topology. In each synchronous round, all workers use the same frozen
assignment, and their candidate solutions are aggregated into a single global
update. Fig.~\ref{fig:workflow} summarizes the resulting architecture, from
the preprocessing that fixes the job set once to the per-round cycle of
concurrent regional solves, coordinator-side aggregation, and dual update.
This cycle requires a qubit-efficient representation of every bus
label that remains variable.

\begin{figure*}[!t]
\centering
\includegraphics[width=0.92\textwidth]{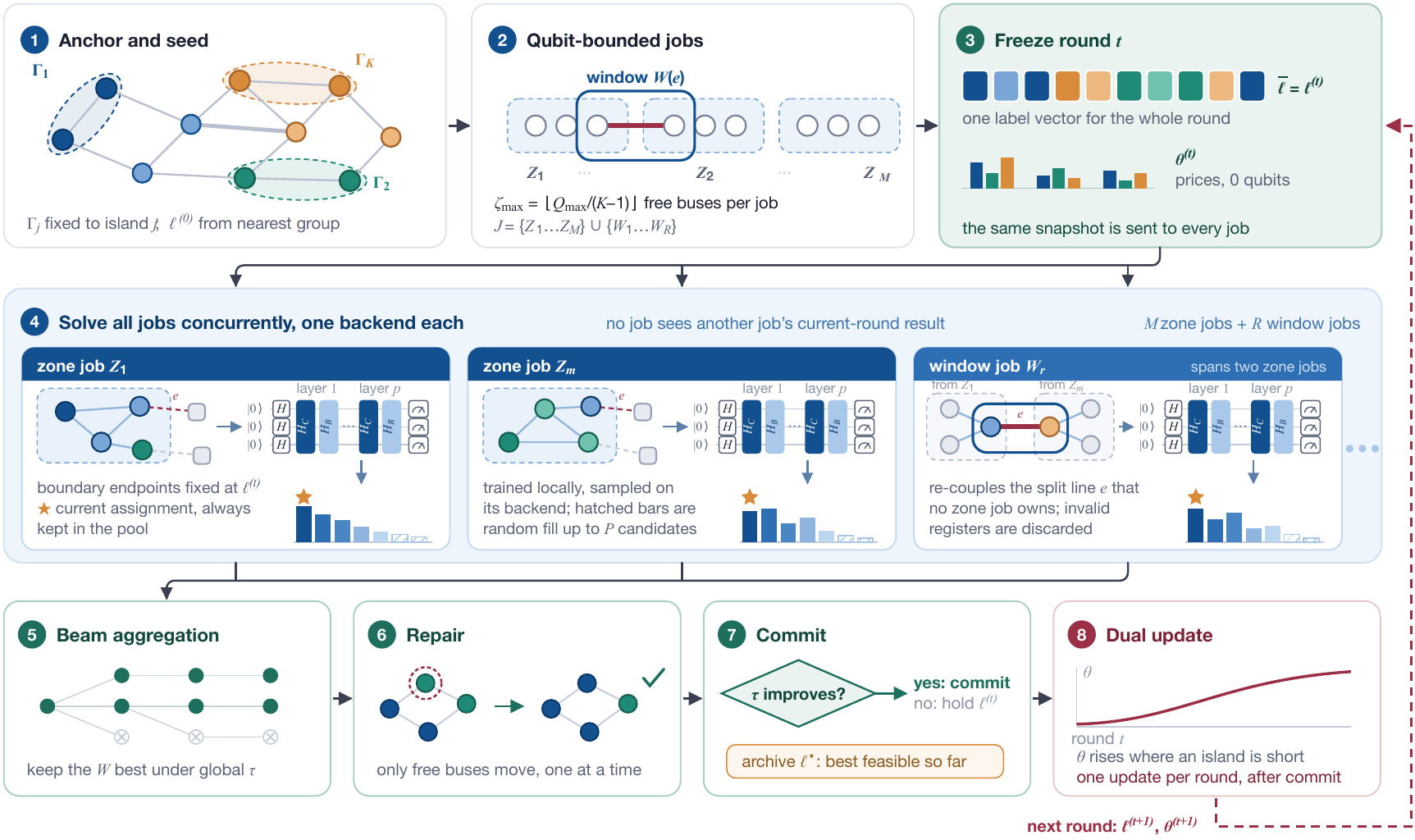}
\caption{Round-synchronous distributed QAOA workflow for coherent controlled
islanding. Preprocessing anchors the coherent groups, initializes the bus
labels, and forms qubit-bounded regional jobs (1)--(2). Each round freezes the
global state, solves all jobs concurrently, aggregates and repairs their
candidates, commits an improving assignment, and updates the dual prices
(3)--(8).}
\label{fig:workflow}
\end{figure*}

\subsection{Compact Free-Bus Encoding}
\label{subsec:encoding}

A compact encoding represents the island assignment of each free bus with
$K-1$ qubits~\cite{jiang2026pace}. For free bus $i$, the membership variables
are defined from the register
$\bm q_i=(q_{i,1},\ldots,q_{i,K-1})$ as
\begin{equation}
\tilde u_{i,k}=
\begin{cases}
q_{i,k}, & k\le K-1,\\[2pt]
1-\displaystyle\sum_{k'=1}^{K-1}q_{i,k'}, & k=K,
\end{cases}
\label{eq:decode}
\end{equation}
where the all-zero register represents label $K$, and the validity condition
$\sum_{k'=1}^{K-1}q_{i,k'}\le1$ makes the decoded label unique. Within the QUBO, invalid register states are penalized by
\begin{equation}
\Xi_i(\bm q_i)=\xi\sum_{1\le k<k'\le K-1}q_{i,k}q_{i,k'},
\qquad \xi>0,
\label{eq:reduced-penalty}
\end{equation}
which is zero for a valid register. Thus, each free bus contributes $K-1$
qubits to a regional QUBO, establishing the per-bus width used to partition
the global problem under a circuit budget.

\subsection{Coherent-Group Anchoring and Regional Decomposition}
\label{subsec:decomp}

The global islanding problem is converted into qubit-bounded regional
subproblems by anchoring coherent groups and partitioning the remaining free
buses. Because
\eqref{eq:coherency-separation} forces a bijection between the $K$
coherent groups and the $K$ islands, and island labels carry no physical
meaning of their own, fixing that bijection to the identity discards no
physically distinct partition~\cite{jiang2026pace}. With $t$ indexing the
synchronous rounds,
\begin{subequations}
\label{eq:anchors}
\begin{align}
\ell_i&=j,\qquad
\forall i\in\Gamma_j,\ \forall j\in\{1,\ldots,K\},
\label{eq:anchor-choice}\\
\mathcal B_{\mathrm a}&=\bigcup_{j=1}^{K}\Gamma_j,\qquad
\mathcal B_{\mathrm v}=\mathcal B\setminus\mathcal B_{\mathrm a},
\label{eq:anchor-sets}\\
\ell_i^{(t)}&=j,\qquad
\forall i\in\Gamma_j,\ \forall j\in\{1,\ldots,K\},\ \forall t\ge0.
\label{eq:anchor-invariant}
\end{align}
\end{subequations}
Equations~\eqref{eq:anchor-choice}--\eqref{eq:anchor-invariant} assign each
coherent group to a distinct fixed label and preserve that assignment
throughout the algorithm. Consequently, all regional jobs and repair
operations act only on $\mathcal B_{\mathrm v}$.

A complete first label vector $\bm\ell^{(0)}$ is built by a multi-source
weighted shortest-path search rooted simultaneously at every $\Gamma_j$,
using line length $w_{ij}$ and its induced distance to a group,
\begin{subequations}
\label{eq:init}
\begin{align}
w_{ij}&=1+\frac{1}{\max(\rho_{ij},\epsilon)},\qquad\epsilon=10^{-9},
\label{eq:graph-distance}\\
w(i,\Gamma_j)&=\min_{\pi:\,i\to\Gamma_j}\ \sum_{\{i',j'\}\in\pi}w_{i'j'},
\label{eq:shortest-distance}\\
\ell_i^{(0)}&=\min\operatorname*{arg\,min}_{j\in\Omega} w(i,\Gamma_j),
\qquad\forall i\in\mathcal B_{\mathrm v},
\label{eq:init-label}
\end{align}
\end{subequations}
where the minimum in \eqref{eq:shortest-distance} runs over paths $\pi$ in
$\mathcal P$ from $i$ to any bus of $\Gamma_j$, so topological proximity
and line weights determine the initial assignment. The outer minimum in
\eqref{eq:init-label} resolves exact ties by label index.

Let $n_{\mathrm v}=|\mathcal B_{\mathrm v}|$ and
$\mathcal B_{\mathrm v}=\{b_1<\cdots<b_{n_{\mathrm v}}\}$ follow bus-index
order. Under the circuit budget $Q_{\max}$, the admissible number of
variable buses per job is $\zeta_{\max}$, yielding
$M=\lceil n_{\mathrm v}/\zeta_{\max}\rceil$ disjoint zones defined by
\begin{subequations}
\label{eq:decomposition}
\begin{align}
\zeta_{\max}&=\Big\lfloor\frac{Q_{\max}}{K-1}\Big\rfloor\ge1,
\label{eq:zeta-max}\\
\mathcal Z_m&=\{b_{(m-1)\zeta_{\max}+1},\ldots,
b_{\min(m\zeta_{\max},\,n_{\mathrm v})}\},
\label{eq:zones}\\
\mathcal B_{\mathrm v}&=\biguplus_{m=1}^{M}\mathcal Z_m,\qquad
|\mathcal Z_m|(K-1)\le Q_{\max},\ \forall m,
\label{eq:zone-bound}
\end{align}
\end{subequations}
for $m=1,\ldots,M$.
Because each bus contributes $K-1$ qubits, this construction caps the regional
circuit width at $Q_{\max}$ independently of $n$, compared with
$Q_{\mathrm G}=|\mathcal B_{\mathrm v}|(K-1)
=(n-|\mathcal B_{\mathrm a}|)(K-1)$ qubits for a single global circuit.

The disjoint zones omit couplings across their boundaries. To retain the most
influential cross-zone terms under the same width bound, define
\begin{equation}
\begin{aligned}
\mathcal L_{\times}=\big\{\{i,j\}\in\mathcal L:\ &
\exists m,m'\in\{1,\ldots,M\},\\[-2pt]
&m\ne m',\ i\in\mathcal Z_m,\ j\in\mathcal Z_{m'}\big\}.
\end{aligned}
\label{eq:crossing-lines}
\end{equation}
Let $\mathcal N(i)=\{j\in\mathcal B:\{i,j\}\in\mathcal L\}$ denote the
neighbors of bus $i$. Processing the lines in $\mathcal L_{\times}$ by
decreasing $\rho$, with bus indices resolving ties, a line $e=\{i,j\}$
generates the correction window
\begin{equation}
\mathcal W(e)=\operatorname{first}_{\zeta_{\max}}\!
\Big((i,j)\,\big\Vert\,\operatorname{sort}\big((\mathcal N(i)\cup\mathcal N(j))
\cap\mathcal B_{\mathrm v}\big)\Big),
\label{eq:window}
\end{equation}
where $\Vert$ denotes ordered concatenation, $\operatorname{sort}$ uses bus
index, and $\operatorname{first}_{\zeta_{\max}}$ retains the first
$\zeta_{\max}$ distinct entries. Windows containing fewer than two buses and
duplicate windows are removed. Denote the retained windows by
$\mathcal W_1,\ldots,\mathcal W_R$. The resulting job set
\begin{equation}
\mathcal J=\{\mathcal Z_1,\ldots,\mathcal Z_M\}\cup\{\mathcal W_1,\ldots,\mathcal W_R\},
\label{eq:job-set}
\end{equation}
is fixed for all rounds. Although this decomposition bounds the regional
circuit size, the island-level resource requirements in \eqref{eq:presence}
remain global. Their influence enters each regional QUBO through a common
price vector.

\subsection{Regional QUBO Formulation}
\label{subsec:qubo}

Each round-local QUBO combines the regional cut cost, encoding penalty,
inertia, and the current dual reward. At the beginning of round $t$, the
coordinator supplies every job with the same fixed nonnegative price vector
$\bm\theta^{(t)}=(\theta_k^{h,(t)})_{h\in\mathcal H,\,k\in\Omega}$, where
$\mathcal H=\{\mathrm b,\mathrm g,\mathrm d\}$ indexes buses, generators, and
demand, respectively, and $\mathcal B_{\mathrm b}:=\mathcal B$. Assigning free
bus $i$ to island $k$ contributes the linear coefficient
\begin{equation}
\delta_{i,k}(\bm\theta)=
-\sum_{h\in\mathcal H}\ind\{i\in\mathcal B_h\}\theta_k^{h}.
\label{eq:dual-reward}
\end{equation}
This coefficient is evaluated at $\bm\theta^{(t)}$ in round $t$. It adds no qubits,
and $\bm\theta^{(t)}=\bm0$ recovers the unpriced QUBO, which is the state at
$\bm\theta^{(0)}=\bm0$ in the first round. All jobs are constructed from
the same frozen snapshot
$\bar{\bm\ell}:=\bm\ell^{(t)}$, producing a Jacobi-style update in which no job
sees another job's current-round result~\cite{jiang2026pace}. For a job
$\mathcal U\in\mathcal J$, let $\mathcal L_{\mathcal U}$ and
$\partial\mathcal U$ denote its internal and boundary lines, respectively.
With each external boundary endpoint fixed at $\bar{\bm\ell}$, the exact cut
contribution of all lines incident to $\mathcal U$ is
\begin{subequations}
\label{eq:job-cut-terms}
\begin{align}
\mathcal L_{\mathcal U}
&=\{\{i,j\}\in\mathcal L:i,j\in\mathcal U\},
\label{eq:internal-lines}\\
\partial\mathcal U
&=\{\{i,j\}\in\mathcal L:|\{i,j\}\cap\mathcal U|=1\},
\label{eq:job-lines}\\
E^{\mathrm c}_{\mathcal U}(\bm q_{\mathcal U}\mid\bar{\bm\ell})
&\nonumber\\
&={}
\sum_{\{i,j\}\in\mathcal L_{\mathcal U}}\!\!
\rho_{ij}\Big(1-\sum_{k\in\Omega}\tilde u_{i,k}\tilde u_{j,k}\Big)
\nonumber\\
&\quad{}+\!\!\sum_{\substack{\{i,j\}\in\partial\mathcal U\\ i\in\mathcal U}}\!\!
\rho_{ij}\big(1-\tilde u_{i,\bar\ell_j}\big).
\label{eq:job-cut}
\end{align}
\end{subequations}
Fixing the external labels preserves every line contribution without
introducing cross-job qubits. The encoding penalty restricted to
$\mathcal U$ is
\begin{equation}
\Xi_{\mathcal U}(\bm q_{\mathcal U})=
\sum_{i\in\mathcal U}\Xi_i(\bm q_i),
\label{eq:regional-encoding-penalty}
\end{equation}
Combining the cut cost, an inertia weight $\iota\ge0$, the encoding penalty,
and the dual reward gives
\begin{subequations}
\label{eq:job-qubo-terms}
\begin{align}
E^{\mathrm i}_{\mathcal U}(\bm q_{\mathcal U}\mid\bar{\bm\ell})&=
\iota\sum_{i\in\mathcal U}\big(1-\tilde u_{i,\bar\ell_i}\big),
\label{eq:job-inertia}\\
E_{\mathcal U}(\bm q_{\mathcal U}\mid\bar{\bm\ell},\bm\theta)={}&
E^{\mathrm c}_{\mathcal U}(\bm q_{\mathcal U}\mid\bar{\bm\ell})
\nonumber\\
&+E^{\mathrm i}_{\mathcal U}(\bm q_{\mathcal U}\mid\bar{\bm\ell})
+\Xi_{\mathcal U}(\bm q_{\mathcal U})\nonumber\\
&
+\sum_{i\in\mathcal U}\sum_{k\in\Omega}\delta_{i,k}(\bm\theta)\,\tilde u_{i,k},
\label{eq:job-qubo}
\end{align}
\end{subequations}
returned in standard form $E_{\mathcal U}(\bm q_{\mathcal U})=\kappa_{\mathcal
U}+\sum_{\lambda}\alpha_{\lambda}q_{\lambda}+\sum_{\lambda<\lambda'}\omega_{\lambda\lambda'}q_{\lambda}q_{\lambda'}$ over its own bits $\bm
q_{\mathcal U}\in\{0,1\}^{|\mathcal U|(K-1)}$, where $\lambda$ and $\lambda'$ index
those bits. For later sums, extend the
quadratic coefficients symmetrically by $\omega_{\lambda'\lambda}:=\omega_{\lambda\lambda'}$ for $\lambda<\lambda'$.
Every job QUBO is built
solely from $\bar{\bm\ell}$, so its coefficients are independent of the other
current-round jobs. This independence allows the regional circuits to be
compiled and trained in parallel.

\subsection{Compiling and Training the Regional Circuit}
\label{subsec:circuit}

Each regional QUBO is compiled into a depth-$p$ QAOA circuit and trained
independently. Under the mapping $q_{\lambda}=(1-Z_{\lambda})/2$, the regional objective
$E_{\mathcal U}(\bm q_{\mathcal U})$ defines the diagonal cost operator
$\widehat E_{\mathcal U}$. With
$R_Z(\varphi)=e^{-\mathrm i\varphi Z/2}$ and
$R_{ZZ}(\varphi)=e^{-\mathrm i\varphi Z\otimes Z/2}$, the quadratic objective
in \eqref{eq:job-qubo} factors exactly into one $R_{ZZ}$ rotation per quadratic
term and one $R_{Z}$ rotation per qubit, absorbing the linear coefficient
of that qubit together with half of every quadratic coefficient touching
it, without requiring a cost table for circuit compilation:
\begin{equation}
\begin{aligned}
e^{-\mathrm i\gamma\widehat E_{\mathcal U}}={}&
\prod_{\lambda<\lambda'}R_{ZZ,\lambda\lambda'}\!
\Big(\tfrac{\gamma\omega_{\lambda\lambda'}}{2}\Big)\\
&\times\prod_{\lambda}R_{Z,\lambda}\!\Big({-\gamma}\Big(\alpha_{\lambda}
+\!\sum_{\lambda'\neq\lambda}\tfrac{\omega_{\lambda\lambda'}}{2}\Big)\Big).
\end{aligned}
\label{eq:phase-op}
\end{equation}
The regional QAOA state at depth $p$, with mixer $R_X(2\beta_r)$ applied
on every qubit, is
\begin{align}
|\bm\gamma,\bm\beta\rangle_{\mathcal U}&=
\prod_{r=1}^{p}e^{-\mathrm i\beta_r\widehat M_{\mathcal U}}
e^{-\mathrm i\gamma_r\widehat E_{\mathcal U}}\,
|+\rangle^{\otimes|\mathcal U|(K-1)},
\label{eq:qaoa-state}\\
\widehat M_{\mathcal U}&=\sum_{\lambda} X_{\lambda}.\nonumber
\end{align}

The $2p$ initial angles are sampled independently, with
$\gamma_r\sim\operatorname{Unif}(0,2\pi)$ and
$\beta_r\sim\operatorname{Unif}(0,\pi)$. COBYLA then minimizes the expected
value of $E_{\mathcal U}$. The register width is $n_{\mathcal U}=|\mathcal U|(K-1)$, bounded by
$Q_{\max}$ through \eqref{eq:zone-bound}. When $n_{\mathcal U}\le20$, training
uses the exact, zero-shot statevector expectation over an enumerated cost
table. Beyond that cutoff, exact simulation is intractable, so training
instead uses a finite-shot matrix-product-state estimate with
$\min(1024,\max(128,2^{\min(12,n_{\mathcal U})}))$ shots per COBYLA
evaluation. In either regime, training runs entirely on a local simulator,
after which the trained circuit is sent to its assigned backend for sampling.

\subsection{Regional Candidate Pool Construction}
\label{subsec:sampling}

Each backend converts a regional solution into an ordered candidate pool of
target size $N_{\mathrm c}\ge1$. The buses of every job $\mathcal U$ follow bus-index order.
Let $\mathcal S_{\mathcal U}$ be the multiset of measured registers
$\bm q_{\mathcal U}=(\bm q_i)_{i\in\mathcal U}$. A register is valid when every
$\bm q_i$ satisfies the validity condition of \eqref{eq:decode}, in which case
\eqref{eq:decode} leaves exactly one label active per bus and the job decoder
returns that label for every bus of the job,
\begin{equation}
\operatorname{dec}_{\mathcal U}(\bm q_{\mathcal U})=
\big(k_i\big)_{i\in\mathcal U}\in\Omega^{|\mathcal U|},
\qquad \tilde u_{i,k_i}=1 .
\label{eq:job-decoder}
\end{equation}
Invalid registers are discarded. For each assignment
$\bm a\in\Omega^{|\mathcal U|}$, define
\begin{subequations}
\label{eq:measured-pool}
\begin{align}
f_{\mathcal U}(\bm a)
&=\!\!\sum_{\bm q_{\mathcal U}\in\mathcal S_{\mathcal U}}\!\!
\ind\{\operatorname{dec}_{\mathcal U}(\bm q_{\mathcal U})=\bm a\},
\label{eq:candidate-frequency}\\
\mathcal C^{\mathrm m}_{\mathcal U}
&=\operatorname{top}^{\downarrow f_{\mathcal U}}_{N_{\mathrm c}}
\big\{\bm a\in\Omega^{|\mathcal U|}:f_{\mathcal U}(\bm a)>0\big\}.
\label{eq:measured-candidates}
\end{align}
\end{subequations}
Thus, distinct valid assignments are retained in decreasing measurement
frequency.

When fewer than $N_{\mathrm c}$ valid assignments are measured, distinct uniform
assignments complete the candidate pool. Let $A\ge N_{\mathrm c}$ denote
the maximum number of backfill attempts:
\begin{subequations}
\label{eq:candidate-backfill}
\begin{align}
\bm a_r&\overset{\mathrm{i.i.d.}}{\sim}
\operatorname{Unif}\big(\Omega^{|\mathcal U|}\big),\qquad
r=1,\ldots,A,
\label{eq:random-backfill}\\
\mathcal R_{\mathcal U}&=
\operatorname{uniq}(\bm a_1,\ldots,\bm a_A)
\setminus\mathcal C^{\mathrm m}_{\mathcal U},
\label{eq:backfill-set}\\
\widetilde{\mathcal C}_{\mathcal U}&=
\operatorname{first}_{N_{\mathrm c}}\big(
\mathcal C^{\mathrm m}_{\mathcal U}\Vert
\mathcal R_{\mathcal U}\big).
\label{eq:completed-pool}
\end{align}
\end{subequations}
Here, $\operatorname{uniq}$ removes duplicates while preserving first
occurrence, and $\Vert$ preserves left-to-right priority. The incumbent
regional assignment is then prepended as
\begin{equation}
\mathcal D_{\mathcal U}=
\operatorname{prepend}\!\left(
\bar{\bm\ell}|_{\mathcal U},\,
\widetilde{\mathcal C}_{\mathcal U}
\setminus\{\bar{\bm\ell}|_{\mathcal U}\}\right).
\label{eq:regional-candidate-pool}
\end{equation}
Here, $\operatorname{prepend}(\bm a,\mathcal C)$ places $\bm a$ at the head of
the ordered pool $\mathcal C$. The resulting pools are next evaluated on the
complete network.

\subsection{Candidate Evaluation and Ordering}
\label{subsec:scoring}

Complete candidates are ranked by feasibility and disruption before global
aggregation. The ordering preserves the feasible problem while still
providing direction among infeasible candidates. For an arbitrary
$\bm\ell\in\Omega^n$, fix one representative $\hat\imath_j\in\Gamma_j$ for every
coherent group. For $k\in\Omega$, $h\in\mathcal H$, and
$j\in\{1,\ldots,K\}$, define
\begin{subequations}
\label{eq:violation-indicators}
\begin{align}
N^h_k(\bm\ell)&=|\mathcal B_h\cap\mathcal B_k(\bm\ell)|
=\sum_{i\in\mathcal B_h}u_{i,k},
\label{eq:resource-count}\\
v^h_k(\bm\ell)&=\ind\{N^h_k(\bm\ell)<\mu_h\},\qquad
v^{\mathrm c}_k(\bm\ell)=\ind\{c_k(\bm\ell)\ne1\},
\label{eq:island-violation-indicators}\\
v^{\mathrm{coh}}_j(\bm\ell)&=
\ind\big\{|\{\ell_i:i\in\Gamma_j\}|>1\big\},
\nonumber\\
v^{\mathrm{sep}}(\bm\ell)&=
\ind\big\{|\{\ell_{\hat\imath_j}:j=1,\ldots,K\}|<K\big\}.
\label{eq:coherency-violation-indicators}
\end{align}
\end{subequations}

The total violated-clause count and the resulting big-constant score are
\begin{equation}
\nu(\bm\ell)=\sum_{k\in\Omega}
\left(\sum_{h\in\mathcal H}v^h_k(\bm\ell)+v^{\mathrm c}_k(\bm\ell)\right)
+\sum_{j=1}^{K}v^{\mathrm{coh}}_j(\bm\ell)+v^{\mathrm{sep}}(\bm\ell),
\label{eq:violation-count}
\end{equation}
\begin{equation}
\Psi(\bm\ell)=\Phi(\bm\ell)+\Lambda\,\nu(\bm\ell),
\qquad \Lambda=10^{6}.
\label{eq:score}
\end{equation}
$\Psi$ serves as a compact summary and as the score inside a single
exhaustively enumerated pool. A single large constant, however, cannot enforce
a strict priority among feasibility, cut cost, and violation count, so the
coordinator instead compares global candidates by the lexicographic tuple
\begin{equation}
\tau(\bm\ell)=
\Big(\ind\{\bm\ell\notin\mathcal X\},\
\begin{cases}\Phi(\bm\ell)&\bm\ell\in\mathcal X\\ \nu(\bm\ell)&\bm\ell\notin\mathcal X\end{cases},\
\Psi(\bm\ell),\ \mathrm{tie}(\bm\ell)\Big)
\label{eq:order}
\end{equation}
compared entrywise with smaller preferred, where $\mathrm{tie}(\bm\ell)$ is the
bus-index-ordered label vector that resolves exact ties deterministically. The
first entry ranks every feasible candidate ahead of every infeasible one,
after which \eqref{eq:order} reduces to $\Phi$ among feasible candidates and
to the violation count, then $\Psi$, among infeasible ones.

The anchor convention of Subsection~\ref{subsec:decomp} keeps
$v^{\mathrm{coh}}_j(\bm\ell)$ and $v^{\mathrm{sep}}(\bm\ell)$ identically zero
throughout the search. Both terms are retained in $\nu(\bm\ell)$ as general
well-formedness checks on an arbitrary assignment. The resulting order governs
both candidate aggregation and repair.

\subsection{Global Candidate Aggregation and Monotonic Update}
\label{subsec:aggregate}

The regional candidate pools are combined by beam aggregation, followed by
repair and monotonic commit. A fixed job ordering
$\mathcal U_1,\ldots,\mathcal U_{|\mathcal J|}$ makes aggregation independent
of worker completion times. For $\bm a\in\mathcal D_{\mathcal U}$,
$\operatorname{Embed}_{\mathcal U}(\bm\ell',\bm a)$ replaces the labels of
$\mathcal U$ in $\bm\ell'$ with $\bm a$ and leaves all other labels unchanged.
Each regional candidate is therefore incorporated as a complete block.
Beginning at the frozen snapshot, a bounded beam search folds every job pool
into the running frontier,
\begin{align}
\mathcal F_0&=\{\bar{\bm\ell}\},
\label{eq:beam}\\
\mathcal F_\sigma&=\operatorname{top}_{W}
\big\{\operatorname{Embed}_{\mathcal U_\sigma}(\bm\ell',\bm a):
\bm\ell'\in\mathcal F_{\sigma-1},\,\bm a\in\mathcal D_{\mathcal U_\sigma}\big\},
\nonumber
\end{align}
for $\sigma=1,\ldots,|\mathcal J|$, where $\operatorname{top}_{W}$
keeps the $W$ smallest-$\tau(\cdot)$ elements, ranked on
the complete network after every embedding, never on the local
value of any one job, since cross-job couplings and global constraints
make a sum of local values an invalid stand-in for global quality. Setting
$W\ge
\prod_{\mathcal U\in\mathcal J}|\mathcal D_{\mathcal U}|$ makes
\eqref{eq:beam} retain every assignment obtained by embedding one candidate
from each pool in the fixed job order, where a later job overwrites the labels
it shares with an earlier one. For a prescribed beam width, the number of candidate evaluations per
round is bounded by
$O(|\mathcal J|\,W
\max_{\mathcal U\in\mathcal J}|\mathcal D_{\mathcal U}|)$.

Postprocessing applies a constrained repair operator
$\operatorname{Rep}:\Omega^n\to\Omega^n$ to each aggregated candidate through greedy
single-bus descent and connectivity restoration~\cite{jiang2026regrid}. All
moves are restricted to $\mathcal B_{\mathrm v}$. The original candidate
remains the fallback under the selection
\begin{equation}
\operatorname{Post}(\bm\ell)=
\operatorname*{arg\,min}_{\bm y\in
\{\bm\ell,\operatorname{Rep}(\bm\ell)\}}\tau(\bm y).
\label{eq:postprocessing-selection}
\end{equation}
The best postprocessed candidate is accepted only when it improves the current
assignment,
\begin{subequations}
\label{eq:selection-commit}
\begin{align}
\bm\ell_{\mathrm c}^{(t)}&=\operatorname*{arg\,min}_{\bm y\in
\{\operatorname{Post}(\bm\ell'):\bm\ell'\in\mathcal F_{|\mathcal J|}\}}
\tau(\bm y),
\label{eq:selected-candidate}\\
\bm\ell^{(t+1)}&=
\begin{cases}
\bm\ell_{\mathrm c}^{(t)}, &
\tau(\bm\ell_{\mathrm c}^{(t)})<\tau(\bm\ell^{(t)}),\\
\bm\ell^{(t)}, & \text{otherwise},
\end{cases}
\label{eq:commit}
\end{align}
\end{subequations}
so a badly sampled round can never regress the committed assignment. A
separate archive $\bm\ell^{\mathrm{best}}$ retains the feasible assignment with the
smallest $\Phi$ seen so far, which need not solve \eqref{eq:program}. It is updated only when $\bm\ell^{(t+1)}$ is feasible and
improves $\Phi$. The final solution is $\bm\ell^{\mathrm{best}}$ when the archive is
nonempty and the best encountered assignment otherwise. Thus, an infeasible
or inferior round cannot overwrite the best feasible solution. The committed
assignment also determines the dual-price update used by the next round.

\subsection{Post-Commit Dual Coordination}
\label{subsec:duals}

The committed assignment supplies the global resource counts used to update
the next round's dual prices. By \eqref{eq:resource-count},
\eqref{eq:presence} is equivalent to $N^{h}_{k}(\bm\ell)\ge\mu_h$ for every
$k\in\Omega$ and $h\in\mathcal H$. Encoding this
inequality directly in a QUBO requires the binary slack register
\begin{align}
N^{h}_{k}(\bm\ell)-\!\!\sum_{r=0}^{S_h-1}\!\!2^{r}z^{h,r}_{k}=\mu_h,
\label{eq:slack}\\
S_h=\big\lceil\log_2(|\mathcal B_h|-\mu_h+1)\big\rceil,
\qquad z^{h,r}_k\in\{0,1\}.
\nonumber
\end{align}
Squaring this residual adds $K\sum_{h\in\mathcal H}S_h$ qubits and introduces couplings
between buses assigned to different regional jobs. These global constraints
are instead represented by the Lagrangian~\cite{jiang2026pace}
\begin{equation}
\Theta(\bm\ell,\bm\theta)=\Phi(\bm\ell)
+\sum_{k\in\Omega}\sum_{h\in\mathcal H}\theta_k^{h}
\big[\mu_h-N^{h}_{k}(\bm\ell)\big],
\label{eq:lagrangian}
\end{equation}
whose membership-dependent terms yield the regional coefficients in
\eqref{eq:dual-reward}. After commit, projected ascent updates each price as
\begin{align}
\theta_k^{h,(t+1)}=\Pi_{[0,\theta_{\max}]}
\Big(\theta_k^{h,(t)}+\eta
\big[\mu_h-N^{h}_{k}(\bm\ell^{(t+1)})\big]\Big),
\label{eq:dual-update}
\\
\Pi_{[0,\theta_{\max}]}(a)=
\min\{\theta_{\max},\max\{0,a\}\},
\nonumber
\end{align}
for $k\in\Omega$ and $h\in\mathcal H$, where $\eta>0$ and
$\theta_{\max}>0$. A deficit increases the corresponding
price, making that resource more attractive in the next round through
\eqref{eq:dual-reward}, whereas a satisfied requirement relaxes the price
toward zero. Updating only after \eqref{eq:commit} ensures that every job in a
round receives the same price vector. The next multipliers are computed from
the committed assignment, whether it is a newly accepted candidate or the
unchanged incumbent. This update closes the synchronous round.

The complete round-synchronous procedure is summarized in
Algorithm~\ref{alg:round}, from parallel regional solves to coordinator-side
aggregation and updates. Here, $T_{\min}$ and $T_{\max}$ are the minimum and
maximum numbers of rounds, $L$ is the stability threshold, $s$ counts
consecutive unchanged rounds, $\chi$ is the snapshot hash that identifies
stale worker results, and $\bot$ marks an empty archive.

\begin{algorithm}[t]
\caption{Round-Synchronous Distributed QAOA}
\label{alg:round}
\footnotesize
\begin{algorithmic}[1]
\Function{Round}{$\bm\ell,\bm\theta,t$}
    \State $\bar{\bm\ell}\gets\bm\ell$;\quad
    $\chi\gets\operatorname{hash}(\bar{\bm\ell})$
    \ForAll{$\mathcal U\in\mathcal J$ \textbf{concurrently}}
        \State assemble $E_{\mathcal U}(\cdot\mid\bar{\bm\ell},\bm\theta)$,
        \eqref{eq:job-qubo}
        \State train $(\bm\gamma,\bm\beta)$ by COBYLA (Sec.~\ref{subsec:circuit})
        \State $\mathcal D_{\mathcal U}\gets$ sample and backfill (Sec.~\ref{subsec:sampling})
    \EndFor
    \State discard any result tagged with round $\neq t$ or hash $\neq\chi$
    \State $\mathcal F_0\gets\{\bar{\bm\ell}\}$
    \For{$\sigma=1,\ldots,|\mathcal J|$}
        \State $\mathcal F_\sigma\gets$ \eqref{eq:beam} \Comment{fold in the pool of $\mathcal U_\sigma$}
    \EndFor
    \State \Return $\arg\min_{\bm y\in
    \{\operatorname{Post}(\bm\ell'):\bm\ell'\in\mathcal F_{|\mathcal J|}\}}
    \tau(\bm y)$
\EndFunction
\Statex
\State $\mathcal J\gets$ \eqref{eq:job-set} (Sec.~\ref{subsec:decomp})
\State $\bm\ell\gets\bm\ell^{(0)}$, \eqref{eq:init-label};\quad
$\bm\ell^{\mathrm{best}}\gets\bm\ell$ if $\bm\ell\in\mathcal X$ else $\bot$
\State $\bm\theta\gets\bm\theta^{(0)}$;\quad $s\gets0$
\For{$t=0,\ldots,T_{\max}-1$}
    \State $\bm\ell_{\mathrm c}\gets\Call{Round}{\bm\ell,\bm\theta,t}$
    \If{$\tau(\bm\ell_{\mathrm c})<\tau(\bm\ell)$}
        \State $\bm\ell\gets\bm\ell_{\mathrm c}$;\quad $s\gets0$
    \Else
        \State $s\gets s+1$
    \EndIf
    \If{$\bm\ell\in\mathcal X$ \textbf{and} ($\bm\ell^{\mathrm{best}}=\bot$ \textbf{or} $\Phi(\bm\ell)<\Phi(\bm\ell^{\mathrm{best}})$)}
        \State $\bm\ell^{\mathrm{best}}\gets\bm\ell$
    \EndIf
    \State advance $\bm\theta$ by \eqref{eq:dual-update} \Comment{if multipliers are priced}
    \If{$s\ge L$ \textbf{and} $t+1\ge T_{\min}$}\ \textbf{break}\EndIf
\EndFor
\State \Return $\bm\ell^{\mathrm{best}}$ if $\bm\ell^{\mathrm{best}}\neq\bot$ else $\bm\ell$
\end{algorithmic}
\end{algorithm}

\subsection{Monotonicity and Completeness Analysis}
\label{subsec:guarantees}

The procedure's monotonic commit rule and exhaustive-aggregation limit yield
two formal properties.

\begin{proposition}[Rank monotonicity]
\label{prop:monotone}
Along the trajectory $\{\bm\ell^{(t)}\}$ produced by Algorithm~\ref{alg:round},
$\tau(\bm\ell^{(t)})$ is non-increasing in $t$. Once $\bm\ell^{\mathrm{best}}$ first
attains $\Phi(\bm\ell^{\mathrm{best}})=\Phi^{\star}$, it is never subsequently
replaced.
\end{proposition}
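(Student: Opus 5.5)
The first claim follows by a one-step induction on $t$ from the commit rule \eqref{eq:commit}, which is the branch in Algorithm~\ref{alg:round} that compares $\tau(\bm\ell_{\mathrm c})$ with $\tau(\bm\ell)$. Fix a round $t$. Either $\tau(\bm\ell_{\mathrm c}^{(t)})<\tau(\bm\ell^{(t)})$, in which case $\bm\ell^{(t+1)}=\bm\ell_{\mathrm c}^{(t)}$ and $\tau$ strictly decreases. Or the test fails, in which case $\bm\ell^{(t+1)}=\bm\ell^{(t)}$ and $\tau$ is unchanged. In both cases $\tau(\bm\ell^{(t+1)})\le\tau(\bm\ell^{(t)})$.

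For this to be meaningful, note that $\tau$ is a tuple compared lexicographically. We have to check that this order is a total preorder on $\Omega^n$; in fact it is a total order, because the last entry $\mathrm{tie}(\bm\ell)$ is the label vector itself. Transitivity of lexicographic comparison then turns the stepwise inequalities into non-increase along the whole trajectory. Nothing else in the round enters the comparison: the sampling, the beam aggregation \eqref{eq:beam}, Post, and the dual update \eqref{eq:dual-update} affect only which candidate $\bm\ell_{\mathrm c}^{(t)}$ is proposed, so the argument is independent of noise and of the backends. The dual update changes $\bm\theta$ but not $\tau$, since $\tau$ depends only on $\bm\ell$ through $\Phi$, $\nu$, $\Psi$ and $\mathcal X$, none of which involves $\bm\theta$.

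For the second claim, I would argue directly from the archive-update rule in Algorithm~\ref{alg:round}. The archive is overwritten only when $\bm\ell\in\mathcal X$ and $\Phi(\bm\ell)<\Phi(\bm\ell^{\mathrm{best}})$. Suppose that at some round $\bm\ell^{\mathrm{best}}$ has $\Phi(\bm\ell^{\mathrm{best}})=\Phi^\star$. A later replacement would need a committed $\bm\ell\in\mathcal X$ with $\Phi(\bm\ell)<\Phi^\star$. This contradicts the definition \eqref{eq:program} of $\Phi^\star$ as the minimum of $\Phi$ over $\mathcal X$. One induction on subsequent rounds then shows that the archive stays fixed until termination. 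The final return statement outputs $\bm\ell^{\mathrm{best}}$ whenever it is not $\bot$, so the optimal archive is what gets returned.

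Both steps are short, and the main obstacle is only bookkeeping: making sure the trajectory $\{\bm\ell^{(t)}\}$ and the archive are defined consistently with the pseudocode. In particular, the initialization sets $\bm\ell^{\mathrm{best}}$ from $\bm\ell^{(0)}$ if it is feasible, and the archive test in the pseudocode is evaluated on the committed $\bm\ell$ after the commit. I would also note explicitly that $\Phi^\star$ exists, because $\mathcal X$ is a nonempty finite set by the assumption following \eqref{eq:feasible-set}.
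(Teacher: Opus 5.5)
Your proposal is correct and follows essentially the same route as the paper: non-increase of $\tau$ from the commit rule \eqref{eq:commit}, and non-replacement of an optimal archive because any replacement would require a feasible assignment with $\Phi<\Phi^\star$. The extra bookkeeping you add is sound but goes beyond what the paper spells out. This covers the total-order property of $\tau$, the observation that the dual update never enters $\tau$, and the existence of $\Phi^\star$ from finiteness and nonemptiness of $\mathcal X$.
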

\begin{IEEEproof}
By \eqref{eq:commit}, $\bm\ell^{(t+1)}$ equals $\bm\ell^{(t)}$ unless
$\tau(\bm\ell_{\mathrm c}^{(t)})<\tau(\bm\ell^{(t)})$, in which case
$\bm\ell^{(t+1)}=\bm\ell_{\mathrm c}^{(t)}$ attains the strictly smaller value,
so $\tau(\bm\ell^{(t)})$ is non-increasing. The archive is overwritten only
by a feasible assignment with strictly smaller $\Phi$, and every feasible
assignment satisfies $\Phi(\bm\ell)\ge\Phi^{\star}$, so once
$\Phi(\bm\ell^{\mathrm{best}})=\Phi^{\star}$ no candidate can replace it.
\end{IEEEproof}

\begin{proposition}[Completeness under exhaustive coverage]
\label{prop:complete}
If $W\ge\prod_{\mathcal U\in\mathcal J}|\mathcal D_{\mathcal U}|$,
then $\mathcal F_{|\mathcal J|}$ from \eqref{eq:beam} contains every complete
assignment obtained by embedding one candidate from each job pool in the fixed
job order. If, in
addition, every job pool is exhaustive,
$\mathcal D_{\mathcal U}=\Omega^{|\mathcal U|}$ for all
$\mathcal U\in\mathcal J$, then the best feasible element of
$\mathcal F_{|\mathcal J|}$ solves \eqref{eq:program} exactly.
\end{proposition}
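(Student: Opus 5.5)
The plan is to prove the two claims separately: the first by induction on the beam stage $\sigma$, the second by exhibiting an optimal solution of \eqref{eq:program} inside $\mathcal F_{|\mathcal J|}$. For the first claim, define
\[
\mathcal E_\sigma=\big\{\operatorname{Embed}_{\mathcal U_\sigma}(\cdots\operatorname{Embed}_{\mathcal U_1}(\bar{\bm\ell},\bm a_1)\cdots,\bm a_\sigma):\bm a_s\in\mathcal D_{\mathcal U_s},\ s\le\sigma\big\},
\]
with $\mathcal E_0=\{\bar{\bm\ell}\}$. This is the set of complete assignments obtained by embedding one candidate from each of the first $\sigma$ pools in the fixed order, with later jobs overwriting shared labels.

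I will show $\mathcal F_\sigma=\mathcal E_\sigma$ by induction on $\sigma$. The base case $\sigma=0$ holds by \eqref{eq:beam}. For the step, assume $\mathcal F_{\sigma-1}=\mathcal E_{\sigma-1}$. Then the set inside $\operatorname{top}_W$ at stage $\sigma$ is exactly $\mathcal E_\sigma$. Its cardinality is at most $\prod_{s\le\sigma}|\mathcal D_{\mathcal U_s}|\le\prod_{\mathcal U\in\mathcal J}|\mathcal D_{\mathcal U}|\le W$, using that every pool is nonempty because it contains the incumbent by \eqref{eq:regional-candidate-pool}. Since $\operatorname{top}_W$ applied to a set of at most $W$ elements returns the whole set, no ranking by $\tau$ ever discards anything, and $\mathcal F_\sigma=\mathcal E_\sigma$. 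Taking $\sigma=|\mathcal J|$ gives the first claim.

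For the second claim, I first produce an optimal solution compatible with the anchor convention \eqref{eq:anchor-choice}. Since $\mathcal X$ is nonempty and finite, pick an optimal $\bm x\in\mathcal X$. By \eqref{eq:coherency-separation}, the map $j\mapsto$ (label of $\Gamma_j$ in $\bm x$) is a bijection $\pi$ of $\Omega$. The relabeled vector $\bm x'=\pi^{-1}\circ\bm x$ has the following properties:
\begin{itemize}
\item Its islands are those of $\bm x$ permuted, so $\Phi$ is unchanged, since \eqref{eq:objective} depends only on whether endpoint labels agree.
\item Connectivity, coherency/separation and the per-island thresholds \eqref{eq:presence} are invariant under permuting island indices, so $\bm x'\in\mathcal X$.
\item It satisfies $x'_i=j$ for $i\in\Gamma_j$.
\end{itemize}
Hence $\bm x'$ is optimal and agrees with $\bar{\bm\ell}$ on $\mathcal B_{\mathrm a}$ by the invariant \eqref{eq:anchor-invariant}.

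Next I choose, for every job, $\bm a_{\mathcal U}=\bm x'|_{\mathcal U}$. This choice is admissible because $\mathcal D_{\mathcal U}=\Omega^{|\mathcal U|}$. Every embedding writes labels taken from $\bm x'$, so overwriting by later jobs is harmless: each bus in $\bigcup_{\mathcal U}\mathcal U$ ends with label $x'_i$. Since the zones partition $\mathcal B_{\mathrm v}$ by \eqref{eq:zone-bound}, this union covers $\mathcal B_{\mathrm v}$. The remaining buses lie in $\mathcal B_{\mathrm a}$, which no job touches, so they keep $\bar\ell_i=x'_i$. The resulting element of $\mathcal E_{|\mathcal J|}=\mathcal F_{|\mathcal J|}$ is therefore $\bm x'$ itself.

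It follows that $\mathcal F_{|\mathcal J|}$ contains a feasible element, namely $\bm x'$. Its best feasible element $\bm y$ satisfies $\Phi(\bm y)\le\Phi(\bm x')=\Phi^\star$, and because $\bm y\in\mathcal X$ it also satisfies $\Phi(\bm y)\ge\Phi^\star$. Thus $\bm y$ solves \eqref{eq:program} exactly.

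The main obstacle I expect is the anchoring step. An arbitrary optimum need not match the fixed identity bijection between coherent groups and islands. The relabeling argument must carefully check that every constraint in \eqref{eq:program} and the objective are invariant under permutations of $\Omega$. This holds because $\mu_h$ is the same for every $k$. The overwrite-consistency argument is routine once every job candidate is taken as a restriction of the same target vector.
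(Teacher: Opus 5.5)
Your proof is correct and takes the same route as the paper. The width bound means $\operatorname{top}_W$ never discards anything. Exhaustive pools let you take each job's restriction of a target assignment, so overwrites are harmless and the zones cover $\mathcal B_{\mathrm v}$. Label-permutation invariance of $\Phi$ and $\mathcal X$ then gives an anchored optimum.

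Your version also makes explicit several steps the paper leaves implicit. These are the induction on $\sigma$, the nonemptiness of every pool (via the prepended incumbent) needed for the product bound, and the concrete relabeling $\bm x'=\pi^{-1}\circ\bm x$ that produces the anchored optimum.
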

\begin{IEEEproof}
With $W\ge
\prod_{\mathcal U\in\mathcal J}|\mathcal D_{\mathcal U}|$, no
candidate is ever discarded by $\operatorname{top}_{W}$ at any
fold step of \eqref{eq:beam}. Thus, the final frontier contains every complete
assignment reachable by that ordered embedding. Because the zones cover
$\mathcal B_{\mathrm v}$ and every job pool is exhaustive, selecting from each
job the restriction of a target assignment to that job reproduces the target
after the final embedding, so the frontier contains an anchored representative
of every physically distinct partition.
The objective $\Phi$ is invariant under label permutations, so its best
feasible representative attains $\Phi^{\star}$.
\end{IEEEproof}

Proposition~\ref{prop:complete} applies exactly to the smallest benchmark
instance. Larger instances use a finite $W$, which bounds the per-round cost
while Proposition~\ref{prop:monotone} still holds, so the committed sequence
improves monotonically and the archive keeps the best feasible assignment.
Section~\ref{sec:results} measures the resulting quality against the Gurobi
optimum.

\section{Numerical Examples}
\label{sec:results}

In this section, numerical examples are carried out on power-system test systems spanning networks from 9 to 300 buses. The proposed framework is benchmarked across a range of quantum-computing environments, including the IBM real quantum computer \texttt{ibm\_marrakesh}, the IBM ideal simulator, the IBM ideal simulator equipped with a calibrated noise model derived from the \texttt{ibm\_marrakesh} backend, the 108-qubit Rigetti Cepheus-1-108Q QPU, the 54-qubit IQM Emerald QPU, and the 34-qubit Amazon Braket SV1 state-vector simulator. All environments cover the full 9--300-bus range except Rigetti, which hardware limits confine to systems through 118 buses, so the comparison spans increasing scale under both idealized and hardware-realistic execution.

Table~\ref{tab:experiment-parameters} summarizes the experimental setup. Here,
$N_{\mathrm w}$ denotes the number of parallel workers,
$|\mathcal S_{\mathcal U}|$ is the number of measurement shots per regional
job, and $T_{\min}/T_{\max}$ gives the minimum/maximum numbers of coordination
rounds. Across the 9--300-bus systems, the regional circuit-width limit
remains $Q_{\max}\le 20$. The QPU runs of the 118--300-bus systems used a
smaller coordination budget, with $T_{\max}=8$, $L\le2$, and
$N_{\mathrm w}=4$, and 3000 shots on the 300-bus system.

\begin{table}[!t]
\caption{Key parameters for the numerical experiments.}
\label{tab:experiment-parameters}
\centering
\scriptsize
\renewcommand{\arraystretch}{1.05}
\begin{tabular}{c c r c c c c}
\toprule
Test system & $p$ & $|\mathcal S_{\mathcal U}|$ & $N_{\mathrm w}$ & $Q_{\max}$ & $T_{\min}/T_{\max}$ & $L$ \\
\midrule
9-, 14-bus & 1 & 100  & 4  & 4  & 1/1  & 1 \\
24-bus  & 1 & 150  & 4  & 8  & 1/1  & 1 \\
30-bus  & 1 & 150  & 4  & 4  & 1/1  & 1 \\
39-bus  & 1 & 300  & 4  & 8  & 1/2  & 2 \\
57-bus  & 3 & 500  & 4  & 8  & 1/3  & 3 \\
73-bus  & 2 & 500  & 4  & 8  & 1/3  & 3 \\
89-bus  & 3 & 1000 & 4  & 16 & 1/3  & 3 \\
118-bus & 3 & 3000 & 16 & 20 & 1/20 & 4 \\
145-bus & 3 & 3000 & 24 & 20 & 1/20 & 4 \\
300-bus & 4 & 4000 & 24 & 20 & 1/20 & 4 \\
\bottomrule
\end{tabular}
\end{table}

\subsection{Solution Quality and Resource Efficiency}
\label{subsec:solution-resource-results}

This subsection demonstrates the effectiveness of the proposed method through
comparison with monolithic QAOA in the IBM quantum computer backend. Table~\ref{tab:qubit-savings} compares their
circuit width, two-qubit gates, circuit depth, and QPU time. Each paired entry
lists the proposed and monolithic values. Here, $Q$ is the circuit width in
qubits, namely $\max_{\mathcal U\in\mathcal J}n_{\mathcal U}$ for the proposed
method and $Q_{\mathrm G}$ for the monolithic circuit, while $G$, $d$, and $t_q$
denote the maximum transpiled two-qubit gate count, maximum circuit depth, and
accumulated QPU time. The qubit-width saving and reduction factor are
$\Delta_Q\triangleq100(1-\max_{\mathcal U}n_{\mathcal U}/Q_{\mathrm G})$ and
$F_Q\triangleq Q_{\mathrm G}/\max_{\mathcal U}n_{\mathcal U}$.
Across the 9--89-bus systems,
decomposition reduces width by
33.3--89.6\% and two-qubit gates by 59.3--96.4\%, while reducing depth in six
of the eight cases. For the 118--300-bus systems, the width saving increases
to 90.6--95.7\%, enabling quantum-hardware execution with at most 20 qubits
when the corresponding monolithic circuits are unavailable.

\begin{table}[!t]
\caption{Quantum-hardware resource comparison with monolithic QAOA.}
\label{tab:qubit-savings}
\centering
\scriptsize
\renewcommand{\arraystretch}{1.05}
\setlength{\tabcolsep}{2.5pt}
\resizebox{0.90\columnwidth}{!}{%
\begin{tabular}{c r r r r r r}
\toprule
& \multicolumn{4}{c}{Proposed/monolithic} & & \\
\cmidrule(lr){2-5}
Test system & $Q$ & $G$ & $d$ & $t_q$ (s) & $\Delta_Q$ (\%) & $F_Q$ \\
\midrule
9-bus   & 4/6    & 6/21      & 29/25     & 2/2      & 33.3 & $1.50\times$ \\
14-bus  & 4/9    & 11/27     & 42/77     & 2/2      & 55.6 & $2.25\times$ \\
24-bus  & 8/26   & 80/248    & 160/94    & 3/3      & 69.2 & $3.25\times$ \\
30-bus  & 4/24   & 11/127    & 47/62     & 3/4      & 83.3 & $6.00\times$ \\
39-bus  & 8/58   & 81/2227   & 162/720   & 8/6      & 86.2 & $7.25\times$ \\
57-bus  & 8/50   & 85/638    & 177/544   & 18/6     & 84.0 & $6.25\times$ \\
73-bus  & 8/54   & 124/2575  & 275/719   & 35/26    & 85.2 & $6.75\times$ \\
89-bus  & 16/154 & 315/3129  & 530/865   & 72/47    & 89.6 & $9.63\times$ \\
118-bus & 18/192 & 995/NA    & 1434/NA   & 120/NA   & 90.6 & $10.67\times$ \\
145-bus & 18/285 & 2613/NA   & 2850/NA   & 277/NA   & 93.7 & $15.83\times$ \\
300-bus & 20/462 & 1266/NA   & 1611/NA   & 520/NA   & 95.7 & $23.10\times$ \\
\bottomrule
\end{tabular}
}
\end{table}

The resource reductions preserve solution quality: every reported backend in
Table~\ref{tab:platform-comparison} attains the Gurobi optimum
$\Phi=\Phi^\star$. Fig.~\ref{fig:islanding-solutions} presents the four largest
solutions. In each case, every island is physically admissible and self-sustaining after the cuts are made. The 89-, 118-, 145-,
and 300-bus solutions cut only 46/206, 17/179, 89/422, and 23/409 branches,
respectively.
Thus, the proposed method combines substantial resource savings with
optimal and physically admissible islanding solutions over the full test
range.

\begin{figure}[!t]
\centering
\includegraphics[width=\columnwidth]{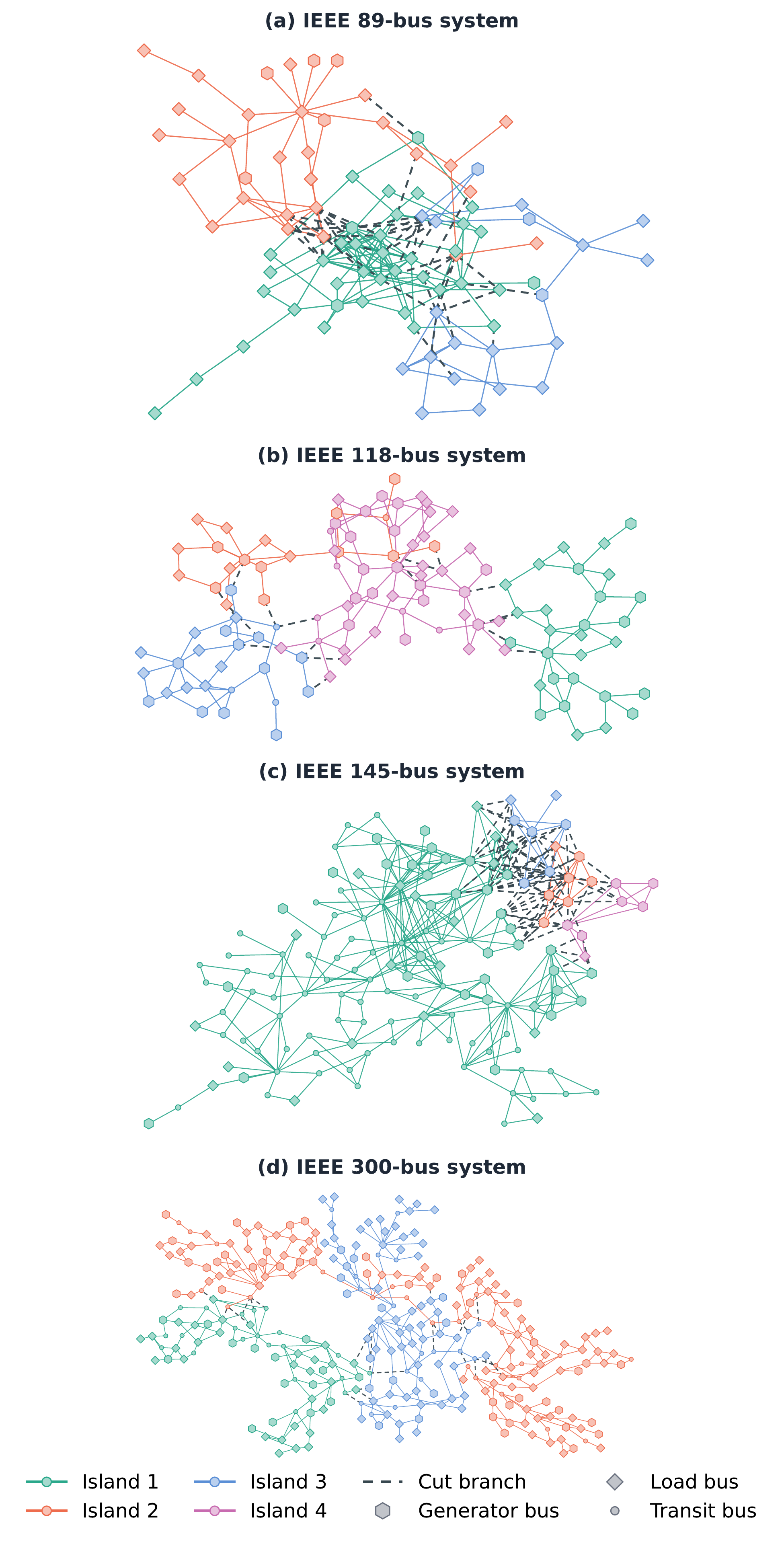}
\caption{Islanding solutions for the (a)~IEEE 89-bus, (b)~IEEE 118-bus,
(c)~IEEE 145-bus, and (d)~IEEE 300-bus systems.}
\label{fig:islanding-solutions}
\end{figure}

\subsection{Noise Resilience and Cross-Backend Benchmarking}
\label{subsec:noise-benchmarking}

In this subsection, the noise resilience of the proposed method is assessed by
executing the same decomposed workload on ideal simulators, noisy simulators,
and real quantum hardware. Unlike
Section~\ref{subsec:solution-resource-results}, where the proposed method is
measured against monolithic QAOA, the workload here is held fixed while the
noise regime and the native architecture vary, which makes the cross-backend
benchmark the instrument through which resilience becomes visible. Noise and
hardware execution act on the best feasible cut objective $\Phi$, and
transpilation onto each native gate set and coupling map determines the
maximum ISA circuit depth $d$ and the two-qubit gate count $G$.
Table~\ref{tab:platform-comparison} and Fig.~\ref{fig:backend-resources}
summarize both effects over all available system--backend pairs, with the
Gurobi optimum $\Phi^\star$ serving as the classical reference. Both
$\Phi$ and $\Phi^\star$ are reported in MW, consistent with the line
weights defined in \eqref{eq:line-weight}.

\begin{table}[!t]
\caption{Cross-platform cut objectives and circuit resources.}
\label{tab:platform-comparison}
\centering
\scriptsize
\renewcommand{\arraystretch}{0.80}
\setlength{\tabcolsep}{3.0pt}
\resizebox{0.82\columnwidth}{!}{%
\begin{tabular}{c l r r r r}
\toprule
Test system & Backend & $\Phi$ (MW) & $\Phi^\star$ (MW) & $d$ & $G$ \\
\midrule
9-bus  & Aer ideal   & 1.238 & 1.238 & 6  & 3 \\
       & Aer noise   & 1.238 &       & 30 & 6 \\
       & IBM QPU     & 1.238 &       & 29 & 6 \\
       & SV1         & 1.238 &       & 13 & 6 \\
       & IQM QPU     & 1.238 &       & 19 & 6 \\
       & Rigetti QPU & 1.238 &       & 22 & 6 \\
\addlinespace[0pt]
14-bus & Aer ideal   & 88.241 & 88.241 & 7  & 4 \\
       & Aer noise   & 88.241 &        & 45 & 11 \\
       & IBM QPU     & 88.241 &        & 42 & 11 \\
       & SV1         & 88.241 &        & 16 & 8 \\
       & IQM QPU     & 88.241 &        & 18 & 8 \\
       & Rigetti QPU & 88.241 &        & 19 & 8 \\
\addlinespace[0pt]
24-bus & Aer ideal   & 776.206 & 776.206 & 10  & 20 \\
       & Aer noise   & 776.206 &         & 128 & 78 \\
       & IBM QPU     & 776.206 &         & 160 & 80 \\
       & SV1         & 776.206 &         & 25  & 40 \\
       & IQM QPU     & 776.206 &         & 31  & 40 \\
       & Rigetti QPU & 776.206 &         & 34  & 40 \\
\addlinespace[0pt]
30-bus & Aer ideal   & 16.863 & 16.863 & 7  & 4 \\
       & Aer noise   & 16.863 &        & 46 & 11 \\
       & IBM QPU     & 16.863 &        & 47 & 11 \\
       & SV1         & 16.863 &        & 16 & 8 \\
       & IQM QPU     & 16.863 &        & 20 & 8 \\
       & Rigetti QPU & 16.863 &        & 22 & 8 \\
\addlinespace[0pt]
39-bus & Aer ideal   & 228.993 & 228.993 & 10  & 20 \\
       & Aer noise   & 228.993 &         & 169 & 83 \\
       & IBM QPU     & 228.993 &         & 162 & 81 \\
       & SV1         & 228.993 &         & 25  & 40 \\
       & IQM QPU     & 228.993 &         & 33  & 40 \\
       & Rigetti QPU & 228.993 &         & 37  & 40 \\
\addlinespace[0pt]
57-bus & Aer ideal   & 128.239 & 128.239 & 21  & 24 \\
       & Aer noise   & 128.239 &         & 173 & 85 \\
       & IBM QPU     & 128.239 &         & 177 & 85 \\
       & SV1         & 128.239 &         & 56  & 48 \\
       & IQM QPU     & 128.239 &         & 66  & 48 \\
       & Rigetti QPU & 128.239 &         & 85  & 54 \\
\addlinespace[0pt]
73-bus & Aer ideal   & 284.543 & 284.543 & 18  & 32 \\
       & Aer noise   & 284.543 &         & 284 & 123 \\
       & IBM QPU     & 284.543 &         & 275 & 124 \\
       & SV1         & 284.543 &         & 48  & 64 \\
       & IQM QPU     & 284.543 &         & 59  & 64 \\
       & Rigetti QPU & 284.543 &         & 64  & 64 \\
\addlinespace[0pt]
89-bus & Aer ideal   & 3790.938 & 3790.938 & 62   & 240 \\
       & Aer noise   & 3790.938 &          & 434 & 279 \\
       & IBM QPU     & 3790.938 &          & 530  & 315 \\
       & SV1         & 3790.938 &          & 179  & 480 \\
       & IQM QPU     & 3790.938 &          & 207  & 480 \\
       & Rigetti QPU & 3790.938 &          & 220  & 480 \\
\addlinespace[0pt]
118-bus & Aer ideal   & 634.050 & 634.050 & 58   & 189 \\
        & Aer noise   & 634.050 &         & 1374 & 1006 \\
        & IBM QPU     & 634.050 &         & 1434 & 995  \\
        & SV1         & 634.050 &         & 167  & 378 \\
        & IQM QPU     & 634.050 &         & 185  & 378 \\
        & Rigetti QPU & 634.050 &         & 193  & 378 \\
\addlinespace[0pt]
145-bus & Aer ideal & 11859.800 & 11859.800 & 98   & 459  \\
        & Aer noise & 11859.800 &           & 2769 & 2372 \\
        & IBM QPU   & 11859.800 &           & 2850 & 2613 \\
        & SV1       & 11859.800 &           & 287 & 918 \\
        & IQM QPU   & 11859.800 &           & 297 & 918 \\
\addlinespace[0pt]
300-bus & Aer ideal & 2322.400 & 2322.400 & 74   & 232  \\
        & Aer noise & 2322.400 &          & 1528 & 1097 \\
        & IBM QPU   & 2322.400 &          & 1611 & 1266 \\
        & SV1       & 2322.400 &          & 214 & 464 \\
        & IQM QPU   & 2322.400 &          & 243 & 464 \\
\bottomrule
\end{tabular}
}
\end{table}

\begin{figure}[!t]
\centering
\includegraphics[width=\columnwidth]{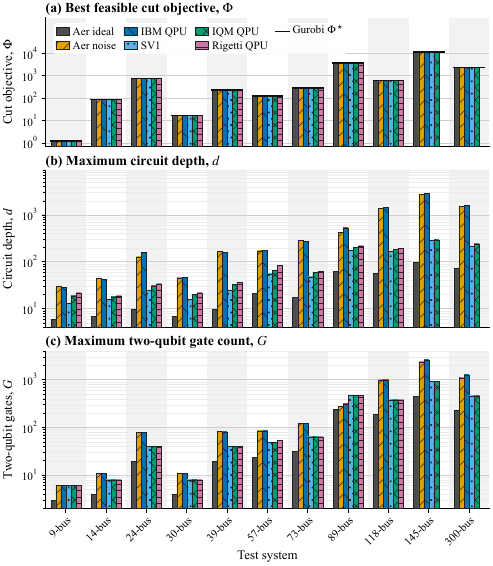}
\caption{Best feasible (a)~cut objective $\Phi$ and maximum (b)~ISA circuit
depth $d$ and (c)~two-qubit gate count $G$ across backends and the
9--300-bus systems.}
\label{fig:backend-resources}
\end{figure}

Across every execution environment considered, from ideal simulation through
the calibrated noise model to the IBM, IQM, and Rigetti QPUs, the recovered
solution quality is unchanged, each reported backend attaining
$\Phi=\Phi^\star$ at all system sizes available on that platform. Such
invariance indicates that the complete workflow preserves the best feasible
solution despite changes in sampling and device noise. This behavior is consistent with the
regional candidate pools and feasibility-first coordinator, which can absorb
perturbations in sample frequencies when an optimal candidate remains
available for aggregation. The equality of $\Phi$ across backends therefore
characterizes the recovered feasible solution rather than the sampled
distributions, which need not coincide.

The resource data show a substantially different form of backend
sensitivity, and the informative quantity is how the compilation overhead
relative to Aer ideal scales with system size. For SV1 and IQM that overhead
is constant in the gate count, which equals exactly twice the ideal value on
all eleven test systems, and Rigetti departs from it only on the 57-bus case
with 54 gates against 48, so each modelled interaction compiles into a fixed
pair of native two-qubit gates regardless of problem size. Depth behaves
almost as steadily, the SV1 factor rising only from 2.2 to 2.9 between the 9-
and 300-bus systems while IQM stays between 2.6 and 3.3 and Rigetti between
2.7 and 4.0. Compilation for the IBM devices scales with the problem instead,
its gate factor reaching 5.3 to 5.7 and its depth factor 21.8 to 29.1 on the
three largest systems, against 2.0 and 4.8 on the 9-bus system, and Aer noise
reproduces both factors throughout. The resulting gap
to the other backends therefore widens with network size, reaching 7.5 to 9.9
in depth and 2.6 to 2.8 in gate count over the 118--300-bus systems. Depth therefore varies far more across backends than gate count, so the
critical path is governed by routing and serialization rather than by the
number of two-qubit operations. Depth also matters more on noisy hardware,
since it sets how long the state is exposed to decoherence.

Repeated gate counts combined with different depths separate the
interaction workload from the way it is scheduled. SV1 executes without a
coupling constraint, so its depth is the shallowest realization of that
workload and serves as the reference for the two QPUs that carry the same
$G$. IQM stays between 3 and 16 percent above this reference for the
89--300-bus systems, which is the cost of routing and native-gate translation
on the Emerald lattice, whereas Rigetti pays more for identical work, needing
85 layers against 56 on the 57-bus system and 220 against 179 on the 89-bus
system. This penalty also narrows as the regions grow, falling from 46 percent on
the 9-bus system to 14 percent on the 300-bus system, which suggests that the
fixed layout cost is amortized once the regional circuits approach the width
cap.

Neither resource metric is monotone in network size. Most notably, the
145-bus case is more demanding than the 300-bus case on all five backends
reported for both systems. On IQM, $d$ decreases from 297 to 243 and $G$ from 918
to 464, and SV1 follows with 287 to 214 and the same halving of $G$. The IBM
QPU repeats the pattern on a larger scale, with $d$ falling from 2850 to 1611,
as do Aer ideal and Aer noise. Because Table~\ref{tab:platform-comparison} reports the
largest regional circuit rather than the aggregate workload, its resources
follow the width, coupling density, and transpiled topology of the hardest
regional QUBO, not the bus count. A larger network can thus hold more jobs yet
need a cheaper maximum circuit, and the next subsection analyzes the total
workload behind these maxima.

\subsection{Computational Complexity Analysis}
\label{subsec:complexity}

Because \eqref{eq:job-set} is a union, $|\mathcal J|\le M+R$, with equality
unless a retained window coincides with a zone. Since each cross-zone line
generates at most one retained window,
\begin{equation}
|\mathcal J|\le
\left\lceil\frac{n_{\mathrm v}}{\zeta_{\max}}\right\rceil+|\mathcal L|.
\label{eq:job-complexity}
\end{equation}
Because each bus uses $K-1$ qubits, $K$ affects $|\mathcal J|$ through
$\zeta_{\max}=\lfloor Q_{\max}/(K-1)\rfloor$. Fixed $K$ and $Q_{\max}$
therefore affect constants, but not the asymptotic order.
Multi-source initialization, linear-time zone formation, and sorting the
cross-zone lines require $O((n+|\mathcal L|)\log n)$ preprocessing time.

For each $\mathcal U\in\mathcal J$, the circuit width is
$n_{\mathcal U}=|\mathcal U|(K-1)\le Q_{\max}$. One QAOA layer contains at
most $n_{\mathcal U}(n_{\mathcal U}-1)/2$ two-qubit cost rotations and
$2n_{\mathcal U}$ one-qubit cost and mixer rotations. Including state
preparation,
\begin{align}
\operatorname{gates}(\mathcal U)
&\le n_{\mathcal U}
+p\left[\frac{n_{\mathcal U}(n_{\mathcal U}-1)}{2}
+2n_{\mathcal U}\right]\nonumber\\
&=O(pQ_{\max}^{2}),
\label{eq:gate-complexity}\\
\operatorname{depth}(\mathcal U)
&=O\!\left(pn_{\mathcal U}^{2}\right)
\le O\!\left(pQ_{\max}^{2}\right).
\label{eq:depth-complexity}
\end{align}
Equations~\eqref{eq:gate-complexity}--\eqref{eq:depth-complexity} bound one
regional job. As shown in Table~\ref{tab:platform-comparison}, the 300-bus
case has lower $d$ and $G$ than the 145-bus case on every common backend.
This comparison does not imply lower end-to-end cost because the total
workload accumulates all regional jobs, shots, and coordination rounds.
On quantum hardware, one circuit execution produces one bitstring in
$\mathcal S_{\mathcal U}$. The logical-gate workload per round is
\begin{equation}
O\!\left(p\sum_{\mathcal U\in\mathcal J}
|\mathcal S_{\mathcal U}|n_{\mathcal U}^{2}\right)
\le O\!\left(pQ_{\max}^{2}
\sum_{\mathcal U\in\mathcal J}|\mathcal S_{\mathcal U}|\right).
\label{eq:qpu-gate-executions}
\end{equation}
Decoding all measured bitstrings takes
$O(Q_{\max}\sum_{\mathcal U}|\mathcal S_{\mathcal U}|)$ operations per round,
where $\mathcal U\in\mathcal J$.

Since $|\mathcal D_{\mathcal U}|\le N_{\mathrm c}+1$, beam aggregation evaluates
$O(|\mathcal J|W(N_{\mathrm c}+1))$ candidates per round, each requiring
$O(n+|\mathcal L|)$ operations to compute $\tau$, for
$O(|\mathcal J|W(N_{\mathrm c}+1)(n+|\mathcal L|))$ operations in total.

For sparse networks with $|\mathcal L|=O(n)$ and fixed $K$, $p$, $Q_{\max}$,
$W$, $N_{\mathrm c}$, and $|\mathcal S_{\mathcal U}|$, $|\mathcal J|=O(n)$ and
$\sum_{\mathcal U\in\mathcal J}|\mathcal S_{\mathcal U}|=O(n)$. Therefore,
\begin{align}
\text{quantum workload per round}&=O(n),
\label{eq:sparse-quantum-complexity}\\
\text{coordination per round}&=O(n^{2}),
\label{eq:sparse-classical-complexity}\\
\text{overall complexity}
&=O\!\left(n\log n+T_{\max}n^{2}\right)\nonumber\\
&=O(T_{\max}n^{2}).
\label{eq:overall-complexity}
\end{align}
Thus, regional decomposition keeps the quantum workload linear and the overall
complexity polynomial.

\section{Conclusion}
\label{sec:conclusion}

In this study, a round-synchronous distributed QAOA framework is developed for
coherent controlled islanding under a prescribed per-circuit qubit budget.
Round synchronization confines the quantum effort to bounded regional
subproblems that are solved concurrently, while global feasibility and
monotone improvement are enforced classically over the complete network. Circuit width is thereby bounded independently of
network size, lowering the qubit requirement by 33.3 to 95.7 percent and the
two-qubit gate count by up to 96.4 percent relative to monolithic QAOA at an
overall complexity of $O(T_{\max}n^{2})$ for sparse networks. Benchmarking on
six execution environments, from ideal simulation to calibrated noise and
different quantum processors, reaches the Gurobi optimum throughout the
9--300-bus range. The proposed method therefore combines noise resilience with
strong adaptability across quantum backends, retaining optimal solution quality even though compilation
cost varies by nearly an order of magnitude between them. Since only the bounded
regional circuits require compilation, the framework transfers to new
platforms without redesign, and the comparison reported here is among the
first cross-platform benchmarks of gate-based quantum optimization for power
systems.

In future work, error mitigation, adaptive shot allocation, and a larger
per-circuit width as hardware matures will be explored to reduce the
compilation overhead measured here and to extend the reachable network size.

\balance
\bibliographystyle{IEEEtran}
\bibliography{References}

\end{document}